\newcommand{\ket}[1]{\left| #1 \ra}
\newcommand{\Sch}{\cS ch}
\newcommand{\sm}{\mathrm{sm}}
\renewcommand{\ss}{\mathrm{ss}}
\newcommand{\sS}{{\mathsf{S}}}
\title{Quantum entanglement, Calabi-Yau manifolds,
and noncommutative algebraic geometry}
\author{Shinnosuke Okawa and Kazushi Ueda}
\date{}
\begin{document}

\maketitle

\begin{abstract}
We relate SLOCC equivalence classes of qudit states
to moduli spaces of Calabi-Yau manifolds
equipped with a collection of line bundles.
The cases of 3 qutrits and 4 qubits are also related
to noncommutative algebraic geometry.
\end{abstract}

\section{Introduction}
A \emph{qudit} is a $d$-level quantum mechanical system.
The $d=2$ case, called a \emph{qubit},
is the analogue of a classical bit,
and plays a fundamental role in quantum information theory.
An essential feature in quantum computing
is the existence of \emph{entanglement}.
A state
$\ket{\psi} \in V_1 \otimes \cdots \otimes V_n$
in an $n$-qudit system
is \emph{separable}
if there are states
$
 \ket{\psi^{(1)}} \in V_1, \ldots, \ket{\psi^{(n)}} \in V_n
$
such that
$
 \ket{\psi} = \ket{\psi^{(1)}} \otimes \cdots \otimes \ket{\psi^{(n)}}.
$
A state which is not separable is \emph{entangled}.

According to \cite{MR1804183},
two states $\ket{\psi}$ and $\ket{\psi'}$
are related by a sequence of
\emph{stochastic local operations and classical communication
(SLOCC)} if there are invertible operators
$
 A^{(1)} \in \GL(V_1), \ldots, A^{(n)} \in \GL(V_n)
$
such that
$
 \ket{\psi'} = (A^{(1)} \otimes \cdots \otimes A^{(n)}) \ket{\psi}.
$
Hence
the classification of quantum mechanical states up to SLOCC
is equivalent to the classification of
$\GL(V_1) \times \cdots \times \GL(V_n)$-orbits
in $V_1 \otimes \cdots \otimes V_n$.
Separable states constitute a single orbit,
and all other orbits are entangled.


In this paper, we relate the classification
of quantum mechanical states up to SLOCC
to the moduli space of algebraic varieties
with additional structures.
We will deal only with the case
$\dim V_1 = \cdots = \dim V_n = d$
for the sake of simplicity,
although the same idea works in greater generality.
\pref{th:main1} shows that
the moduli stack of $n$-qudit states is birational
to the moduli stack of pairs $(Y, (L_1, \ldots, L_{n-1}))$
of a Calabi-Yau manifold $Y$ of dimension $(n-1)(d-1)-d$
and a collection $(L_1, \ldots, L_{n-1})$
of line bundles on $Y$
satisfying some conditions.


Note that $Y$ is an elliptic curve
if and only if $(n, d) = (3,3)$ or $(4, 2)$.
Elliptic curves have already appeared
in several papers in quantum information theory,
such as
\cite{MR2039690,MR2105225,levay2011two}
to name a few.
%
In the case of $(n,d)=(3,3)$,
Theorems \ref{th:main1} and \ref{th:ell-triple} show that
general SLOCC equivalence classes of 3-qutrit states
are in one-to-one correspondence with
isomorphism classes of triples $(C, L_1, L_2)$,
where $C$ is an elliptic curve and
$L_1, L_2$ are line bundles of degree three on $C$.
This case is also related
to non-commutative projective planes,
which is a geometric incarnation
of three-dimensional  Sklyanin algebras
\cite{MR684124,Artin_Tate_Van_den_Bergh}.
In the case of $(n,d)=(4,2)$,
Theorems \ref{th:main1} and \ref{th:ell-quadruple} show that
general SLOCC equivalence classes of 4-qubit states
are in one-to-one correspondence with
isomorphism classes of quadruples $(C, L_1, L_2, L_3)$,
where $C$ is an elliptic curve and
$L_1, L_2, L_3$ are line bundles of degree two on $C$.
This case is related
to non-commutative quadric surfaces
\cite{MR2836401}.
%
These moduli spaces are also studied
in \cite{Bargava-Ho_CS}
from a different point of view.

As a higher-dimensional example,
consider the case $(n, d)=(5,2)$.
\pref{th:main1}
together with \cite[Theorem 7.1]{Bargava-Ho-Kumar_OP} shows that
general SLOCC equivalence classes of 5-qubit states
are in one-to-one correspondence
with isomorphism classes of $M$-polarized K3 surfaces,
where $M = \bZ e_1 \oplus \cdots \oplus \bZ e_4$
is a lattice of rank four
whose intersection matrix is given by
$(e_i, e_j) = 2 - 2 \delta_{ij}$.

The idea of using algebraic geometry
to classify SLOCC equivalence classes of entangled states
also appears in \cite{MR2246703, MR3050572}.
Classification of 4-qubit entanglement is initiated
in \cite{MR1910235}, and
related to string theory in \cite{MR2720335}.
See e.g. \cite{MR2994241} and references therein
for subsequent development.


\emph{Acknowledgment}.
We thank Keiji Oguiso
for pointing out the reference \cite{Bargava-Ho-Kumar_OP}.
S.~O. is supported by JSPS Grant-in-Aid for Young Scientists No.~25800017.
K.~U. is supported by JSPS Grant-in-Aid for Young Scientists No.~24740043.
A part of this work is done
while K.~U. is visiting Korea Institute for Advanced Study,
whose hospitality and nice working environment
is gratefully acknowledged.

\section{Moduli stacks}
 \label{sc:moduli}


For a pair $(d,n)$ of integers greater than one,
we define the category $\cM_{d,n}$ as follows:
\begin{itemize}
 \item
An object
$
 (\varphi\colon  \cY \to S, (\cL_1, \ldots, \cL_{n-1}))
$
consists of
\begin{itemize}
 \item
a smooth morphism $\varphi\colon  \cY \to S$
of schemes, and
 \item
a collection $(\cL_1, \ldots, \cL_{n-1})$ of line bundles on $\cY$
\end{itemize}
such that
\begin{itemize}
 \item
the relative canonical sheaf $\omega_{\cY/S}$ is trivial,
 \item
$\varphi_*(\cL_i)$ for $i=1, \ldots, n-1$ are
locally-free sheaves of rank $d$,
 \item
the composition morphism
\begin{align}
 \mu : \varphi_*(\cL_1) \otimes \cdots \otimes \varphi_*(\cL_{n-1})
  \to \varphi_*(\cL_1 \otimes \cdots \otimes \cL_{n-1})
\end{align}
is surjective with the kernel of rank $d$, and
 \item
the natural morphism
\begin{align}
 \cY
  \to \bP(\varphi_* \cL_1) \times_S \cdots \times_S \bP(\varphi_* \cL_{n-1})
\end{align}
is a closed embedding,
whose image is a complete intersection over $S$.
\end{itemize}
 \item
A morphism from
$
 (\varphi \colon \cY \to S, (\cL_1, \ldots, \cL_{n-1}))
$
to
$
 (\varphi' \colon \cY' \to S', (\cL_1', \ldots, \cL_{n-1}'))
$
is a commutative diagram
\begin{align}
\begin{CD}
 \cY @>{\phi}>> \cY' \\
  @V{\varphi}VV  @VV{\varphi'}V \\
  S @>{\phibar}>> S'
\end{CD}
\end{align}
such that
\begin{itemize}
 \item
$\phibar \colon S \to S'$ is a morphism of schemes,
 \item
$\phi \colon \cY \to \cY'$ induces an isomorphism
$\cY \simto S \times_{S'} \cY'$ of schemes, and 
 \item
$
 \phi^* \cL_i' \cong \cL_i
$
for any $i = 1, \ldots, n-1$.
\end{itemize}
\end{itemize}
The forgetful functor $\cM_{d,n} \to \Sch/\bC$
sending $(\varphi\colon  \cY \to S, (\cL_1, \ldots, \cL_{n-1}))$ to $S$
makes $\cM_{d,n}$ into a category fibered in groupoids.

Let $V_1, \ldots, V_n$ be vector spaces of dimension $d$.
The group
$
 G = \GL(V_1) \times \cdots \times \GL(V_n)
$
acts naturally on
$
 \cR = V_1 \otimes \cdots \otimes V_n,
$
and the quotient stack will be denoted by
\begin{align}
 \cQ_{d,n} = [\cR / G].
\end{align}
This is an Artin stack of finite type over $\bC$.
As a category, it is defined as follows:
\begin{itemize}
\item
An object $(S, \cP, \psi)$ consists of
\begin{itemize}
 \item
a scheme $S$,
 \item
a principal $G$-bundles $\pi \colon \cP \to S$, and
 \item
a $G$-equivariant morphism $\psi \colon \cP \to \cR$.
\end{itemize}

\item
A morphism $(\phi, \phibar)$ consists of
\begin{itemize}
 \item
a morphism $\phibar \colon S \to S'$ of schemes, and
 \item
an isomorphism $\phi : \cP \simto S \times_{S'} \cP'$ of principal $G$-bundles
\end{itemize}
such that the diagram
\begin{align}
\begin{CD}
 \cP @>{\psi}>> \cR \\
 @V{\phi}VV @AA{\pi'}A \\
 S \times_{S'} \cP' @>{\mathrm{pr}_2}>> \cP'
\end{CD}
\end{align}
is commutative.
\end{itemize}

\begin{theorem} \label{th:main1}
The category $\cM_{d,n}$ is an Artin stack
which is birational to $\cQ$.
\end{theorem}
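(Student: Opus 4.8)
The plan is to exhibit a $G$-invariant dense open $\cR^\circ \subset \cR$, to identify the quotient $[\cR^\circ / G]$ with $\cM_{d,n}$ as stacks, and to extract both algebraicity and birationality from this single identification; recall that birationality of stacks means an isomorphism of dense open substacks. The bridge between a tensor and a variety is contraction against the last factor. Writing $\cR = \left(V_1 \otimes \cdots \otimes V_{n-1}\right) \otimes V_n = \mathrm{Hom}\!\left(V_n^*, V_1 \otimes \cdots \otimes V_{n-1}\right)$, a tensor $\psi$ becomes a linear map $\tilde\psi \colon V_n^* \to V_1 \otimes \cdots \otimes V_{n-1}$. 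Using the convention $H^0(\bP(V_i), \cO(1)) = V_i$, the target is the space of sections of $\cO(1, \ldots, 1)$ on $P = \bP(V_1) \times \cdots \times \bP(V_{n-1})$. When $\tilde\psi$ is injective, its image is a $d$-dimensional linear system of divisors of type $(1, \ldots, 1)$, and I set $Y_\psi \subset P$ to be their common zero locus, with $L_i$ the restriction to $Y_\psi$ of $\cO(1)$ from the $i$-th factor.

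First I would check that a generic $\psi$ produces an object of $\cM_{d,n}$. Since $\dim P = (n-1)(d-1)$ and the codimension is $d$, we get $\dim Y_\psi = (n-1)(d-1) - d$; as $\omega_P = \cO(-d, \ldots, -d)$ and $Y_\psi$ is cut out by $d$ sections of $\cO(1, \ldots, 1)$, adjunction gives $\omega_{Y_\psi} = \cO_{Y_\psi}$, so the canonical sheaf is trivial. Smoothness of $Y_\psi$, its being a genuine complete intersection of the expected dimension, the equality $H^0(Y_\psi, L_i) = V_i$, and the statement that the kernel of the multiplication map $\mu$ has rank exactly $d$ and coincides with $\mathrm{im}\,\tilde\psi$ are all open conditions on $\psi$. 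I would take $\cR^\circ$ to be their common locus, check it is $G$-invariant, and promote $\psi \mapsto (Y_\psi, (L_1, \ldots, L_{n-1}))$ to a morphism $[\cR^\circ/G] \to \cM_{d,n}$ by carrying out the construction relatively: over a principal $G$-bundle $\cP \to S$ with equivariant $\psi \colon \cP \to \cR^\circ$ one forms the associated projective bundles, relativizes $\tilde\psi$, and cuts out the relative complete intersection, the whole output descending to $S$ by equivariance. Here $\GL(V_i)$ for $i < n$ is absorbed into the trivialization of $\bP(V_i)$, whereas $\GL(V_n)$ only re-chooses the basis of the linear system and changes neither $Y_\psi$ nor the $L_i$; this is exactly the $G$-invariance.

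Conversely, every object of $\cM_{d,n}$ is of this form. The rank-$d$ kernel of $\mu$ is the space of $(1, \ldots, 1)$-forms vanishing on $\cY$, and together with the complete-intersection and dimension hypotheses it cuts $\cY$ out inside $\prod_i \bP(\varphi_* \cL_i)$; thus an object is determined by the point $\left[\ker \mu\right]$ of the relative Grassmannian $\mathrm{Gr}\!\left(d,\ \varphi_* \cL_1 \otimes \cdots \otimes \varphi_* \cL_{n-1}\right)$ together with the bundles $\varphi_* \cL_i$. Rigidifying the $\varphi_* \cL_i$ by bases exhibits $\cM_{d,n}$ as the quotient of a locally closed subscheme of a Grassmannian by $\prod_{i<n}\GL(V_i)$, which is an Artin stack. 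On the other hand, sending an injective $\tilde\psi$ to the $d$-plane $\mathrm{im}\,\tilde\psi$ realizes the $\GL(V_n)$-quotient of the injective locus of $\cR$ as the same Grassmannian, so that $[\cR^\circ/G]$ is identified with the quotient of the good locus of the Grassmannian by $\prod_{i<n}\GL(V_i)$. Unwinding $\mathrm{im}\,\tilde\psi = \ker\mu$ shows this identification is inverse to the morphism of the previous paragraph, giving $\cM_{d,n} \cong [\cR^\circ/G]$, an open substack of $\cQ = [\cR/G]$, hence the birational equivalence.

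I expect the real work to lie in two places rather than in the formal bookkeeping. The first is the genericity statement underlying $\cR^\circ \neq \emptyset$: one must show that the generic complete intersection of type $(1, \ldots, 1)^{\oplus d}$ in $P$ is smooth of the expected dimension and, crucially, imposes no unexpected linear syzygies, so that $\mathrm{rank}\,\ker\mu$ is exactly $d$ rather than larger; this is a concrete Bertini-plus-dimension-count argument, most transparently verified on a Fermat-type example before invoking openness. The second is the matching of groupoid structure: I must check that the automorphisms of corresponding objects agree, i.e. that the scalars scaling the $L_i$ on the $\cM_{d,n}$ side reproduce precisely the stabilizer tori on the $[\cR^\circ/G]$ side (the $\GL(V_n)$-ambiguity having been spent in forming the Grassmannian), so that the comparison is an equivalence of stacks and not merely a bijection on isomorphism classes.
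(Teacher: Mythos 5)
Your proposal is correct and follows essentially the same route as the paper: contract the tensor against the last factor to get a $d$-dimensional linear system of $(1,\ldots,1)$-divisors cutting out a Calabi-Yau complete intersection, identify $\cM_{d,n}$ with the open substack $[\cR^\circ/G]$ of $\cQ$ (the paper's $\cQ^{\mathrm{sm}}$, where the image of $\eta$ is $d$-dimensional and $Y_\eta$ is smooth), and recover the tensor from an object of $\cM_{d,n}$ as $\ker\mu$, using adjunction to see that the zero locus of $\ker\mu$ is exactly $\cY$. The only cosmetic difference is that you factor the converse through a relative Grassmannian and quotient by $\GL(V_n)$ first, whereas the paper glues a principal $G$-bundle directly from local trivializations of the $\varphi_*\cL_i$ and of $\ker\mu$.
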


\begin{proof}
Let $R = \sS(\cR)$ be the symmetric algebra over the vector space $\cR$.
For an element
$
 \eta \in \cR,
$
let
$
 V_\eta \subset V_1 \otimes \dots \otimes V_{n-1}
$
be the image of $\eta$
considered as an element of
$
 V_1 \otimes \cdots \otimes V_n
  \cong \Hom(V_n^\vee, V_1 \otimes \cdots \otimes V_{n-1}).
$
Let further
$
 I_\eta
  \subset \sS(V_1) \otimes \cdots \otimes \sS(V_{n-1})
$
be the ideal generated by $V_\eta$, and
$Y_\eta$ be the subscheme of
$\bP(V_1) \times \cdots \times \bP(V_{n-1})$
defined by $I_\eta$.
The open subscheme of $\cR$
consisting of elements $\eta$ such that
\begin{itemize}
 \item
$V_\eta$ is $d$-dimensional, and 
 \item
$Y_\eta$ is smooth,
\end{itemize}
will be denoted by $\cR^\sm$.
The $G$-action on $\cR$ preserves $\cR^\sm$, and
we write the corresponding quotient stack as
$
 \cQ^\sm = [\cR^\sm / G].
$
We will show an equivalence
\begin{align} \label{eq:main1}
 \cM_{d,n} \cong \cQ^\sm
\end{align}
of categories fibered in groupoids.

We first define a functor $\Psi\colon \cM_{d,n} \to \cQ^\sm$ as follows:
For an object 
$
 (\varphi \colon \cY \to S, (\cL_1, \ldots, \cL_{n-1}))
$
of the category $\cM_{d,n}$,
choose an open cover
$
 S=\bigcup_{\lambda \in \Lambda} S_\lambda
$
of $S$ such that
$
 \varphi_*(\cL_i)|_{S_\lambda}
$
is trivial for any $i = 1, \ldots, n-1$ and $\lambda \in \Lambda$.
Choose a trivialization
$
 \phi_{i, \lambda} \colon \varphi_*(\cL_i)|_{S_\lambda}
  \simto V_i \otimes \cO_{S_\lambda}
$
and use this to identify $\Ker(\mu)|_{S_\lambda}$
with the subsheaf of
$
 V_1 \otimes \cdots \otimes V_{n-1} \otimes \cO_{S_\lambda}.
$
We may assume that $S_\lambda$ is sufficiently small
so that $\Ker(\mu)|_{S_\lambda}$ is a free sheaf.
A choice of a trivialization
$
 \phi_n \colon \Ker(\mu)|_{S_\lambda} \simto V_n^\vee \otimes \cO_{S_\lambda}
$
gives a morphism $\psibar_\lambda \colon S_\lambda \to \cR$,
which corresponds the morphism of locally free sheaves
\[
\left(\otimes_{i=1}^{n-1}\varphi_{i,\lambda}\right)\circ\varphi_{n,\lambda}^{-1}\colon
V_n^\vee \otimes \cO_{S_\lambda}\hookrightarrow
V_1 \otimes \cdots \otimes V_{n-1} \otimes \cO_{S_\lambda}.
\]
For another open subscheme $S_\rho$,
the restrictions
$
 \psibar_\lambda |_{S_\lambda \cap S_\rho} \colon S_\lambda \cap S_\rho \to \cR
$
and
$
 \psibar_\rho |_{S_\lambda \cap S_\rho} \colon S_\lambda \cap S_\rho \to \cR
$
are related by a change of trivializations $\phi_i$ for $i=1, \ldots, n$.
Since a change of trivialization is given by the action
of the group $G = \GL(V_1) \times \cdots \times \GL(V_n)$,
one can form a principal $G$-bundle $\cP$ on $S$
by gluing $S_\lambda \times G$ by twisting by the action of $G$,
in such a way that $\psibar_\lambda$ lifts to a $G$-equivariant morphism
$\psi : \cP \to \cR$.
The smoothness of the morphism $\varphi\colon  \cY \to S$ implies
that the image of $\psi$ lies in $\cR^\sm$.
The action of $\Psi$ on morphisms is defined in the obvious way.

The functor $\Phi\colon \cQ^\sm \to \cM_{d,n}$ in the other direction
is defined as follows:
Let $\cY_{\cR^\sm}$ be the subscheme of
$\cR^\sm \times \bP(V_1) \times \cdots \times \bP(V_{n-1})$
whose fiber over $\eta \in \cR^\sm$ is
the subscheme $Y_\eta \subset \bP(V_1) \times \cdots \times \bP(V_{n-1})$.
For an object $(S, \cP, \psi)$ of $\cQ^\sm$,
the subscheme $\cY_\cP = \cY_{\cR^\sm} \times_{\cR^\sm} \cP$
of $\cP \times \bP(V_1) \times \cdots \times \bP(V_{n-1})$
has an action of $G$,
which makes the projection $\cY_\cP \to \cP$
into a $G$-equivariant morphism.
Since $G$-equivariance is the cocycle condition
with respect to the coequalizer diagram
\begin{align}
 \cP \times G \rightrightarrows \cP \to S,
\end{align}
the subscheme
$
 \cY_\cP \subset \cP \times \bP(V_1) \times \cdots \times \bP(V_{n-1})
$
descends to a subscheme
$
 \cY \subset S \times \bP(V_1) \times \cdots \times \bP(V_{n-1}).
$
Let $\varphi \colon \cY \to S$ be the projection to $S$.
The line bundles $\cL_i$ for $i=1, \ldots, n-1$ is defined
as the pull-back of the tautological line bundle $\cO_{\bP(V_i)}(1)$
on $\bP(V_i)$.
Then $(\varphi\colon  \cY \to S, (\cL_1, \ldots, \cL_{n-1}))$ gives an object of $\cM_{d,n}$.
The action of $\Phi$ on morphisms is defined in the obvious way.

The image of an object
$
 (\varphi \colon \cY \to S, (\cL_1, \ldots, \cL_{n-1}))
$
of $\cM_{d,n}$ by the functor $\Phi \circ \Psi$
is obtained as the zero of
\begin{align}
 \mu : \varphi_*(\cL_1) \otimes \cdots \otimes \varphi_*(\cL_{n-1})
  \to \varphi_*(\cL_1 \otimes \cdots \otimes \cL_{n-1})
\end{align}
in
$
 \bP(\varphi_*(\cL_1)) \otimes_S \cdots \otimes_S \bP(\varphi_*(\cL_{n-1})).
$
This is a complete intersection over $S$
of degree $(\bsone^d) = (\bsone, \ldots, \bsone)$,
where $\bsone = (1, \ldots, 1) \in \bZ^{n-1}$.
This contains the scheme $\cY$,
which must coincide with $\cY$
since $\cY$ is a complete intersection;
if $\cY$ is strictly smaller,
then $\cY$ cannot have the trivial canonical bundle
by the adjunction formula.
This shows that $\Phi \circ \Psi$ is isomorphic
to the identity functor.
The isomorphism of $\Psi \circ \Phi$
with the identity functor
follows immediately from the construction.
Hence $\Psi$ and $\Phi$ are quasi-inverse to each other,
and \pref{th:main1} is proved.
\end{proof}

Let
$
 H = \lc (\lambda_1 \id_{V_1}, \ldots, \lambda_n \id_{V_n})
  \in G \relmid \lambda_1 \cdots \lambda_n = 1 \rc
$
be the generic stabilizer
of the $G$-action on $\cR$, and
$G' = G/H$ be the quotient group
acting effectively on $\cR$.
Let further $S = R^{G'}$ be the invariant ring of $R$
with respect to the action of $G'$.
The ring $S$ inherits the natural grading
coming from the standard grading of $R$ as a symmetric algebra.
The \emph{geometric invariant theory (GIT) quotient}
of $\cR$ by $G$ is defined as $Q^\ss = \Proj S$
\cite{Mumford-Fogarty-Kirwan}.
A point $x \in \cR$ is \emph{semi-stable} if
there is a $G'$-invariant polynomial $s \in R^{G'}$ such that $s(x) \ne 0$.
Two $G$-orbits $O$ and $O'$ of $\cR^\ss$ are related
by the \emph{closure equivalence}
if the closures of $O$ and $O'$ intersect.
Geometric points of the GIT quotient are in one-to-one correspondence
with closure equivalence classes of $G$-orbits in $\cR^\ss$.
The GIT quotient $Q^\ss$ is the best approximation of the quotient stack
$
 \cQ^\ss = [\cR^\ss/G]
$
by a scheme.
The importance of GIT in quantum entanglement
is first pointed out by Klyachko
\cite{Klyachko_CSE}.

\begin{proposition} \label{pr:main2}
The moduli stack $\cM_{d,n}$
is birational to $\cQ^\ss$.
\end{proposition}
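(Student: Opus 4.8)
The plan is to bootstrap from \pref{th:main1}. Its proof establishes an isomorphism $\cM_{d,n} \cong \cQ^\sm = [\cR^\sm/G]$ of categories fibered in groupoids, and since $\cR^\sm$ is a nonempty $G$-invariant open subscheme of the irreducible affine space $\cR = V_1 \otimes \cdots \otimes V_n$, the substack $\cQ^\sm$ is a dense open substack of $\cQ = [\cR/G]$; thus $\cM_{d,n}$ is birational to $\cQ$. Because birational equivalence is transitive, it suffices to show that $\cQ^\ss = [\cR^\ss/G]$ is likewise birational to $\cQ$. As $\cR^\ss$ is by construction a $G$-invariant open subscheme of the irreducible $\cR$, this reduces entirely to the assertion that $\cR^\ss$ is nonempty: any nonempty open of an irreducible scheme is dense, so $\cQ^\ss$ would then be a dense open substack of $\cQ$ and we would obtain that $\cM_{d,n}$ is birational to $\cQ$ and hence to $\cQ^\ss$.

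I would prove the nonemptiness of $\cR^\ss$ in the sharper form $\cR^\sm \subseteq \cR^\ss$, that is, every tensor $\eta$ cutting out a smooth Calabi--Yau complete intersection $Y_\eta$ is semistable. The locus $\cR^\sm$ is itself nonempty by a Bertini argument: a generic element of $\cR$ of the relevant multidegree $(\bsone^d)$ defines a smooth complete intersection in $\bP(V_1) \times \cdots \times \bP(V_{n-1})$, which has trivial canonical bundle by adjunction and hence lies in $\cR^\sm$. Granting the inclusion $\cR^\sm \subseteq \cR^\ss$, the open subscheme $\cR^\sm$ is then a common dense open of both $\cR^\sm$ and $\cR^\ss$, so $[\cR^\sm/G]$ is a dense open substack of $\cQ^\ss$ as well, and the desired birationality follows at once.

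The main obstacle is precisely this semistability of smooth tensors, where the genuine geometric content lies rather than in the formal manipulation of stacks. I would establish it by the Hilbert--Mumford criterion: a hypothetical destabilizing one-parameter subgroup $\lambda(t) = (\lambda_1(t), \ldots, \lambda_n(t))$ of $G'$ decomposes $\cR$ into weight spaces, and passing to the limit $\lim_{t \to 0} \lambda(t) \cdot \eta$ replaces $Y_\eta$ by the vanishing locus of the lowest-weight part of $\eta$ in the associated coordinate subspaces of $\bP(V_1) \times \cdots \times \bP(V_{n-1})$. Such a weight truncation of the defining equations produces a complete intersection that is either singular or of the wrong dimension, contradicting the smoothness of $Y_\eta$; carrying out this weight bookkeeping for the multilinear group $\prod_{i} \GL(V_i)$, in the geometrically meaningful range $(n-1)(d-1) \ge d$ where $\dim Y_\eta = (n-1)(d-1) - d \ge 0$, is the technical heart of the argument. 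It is the familiar principle that smooth complete intersections are GIT-stable, here adapted to the product group, and once it is in place \pref{pr:main2} follows by combining $\cR^\sm \subseteq \cR^\ss$ with the isomorphism $\cM_{d,n} \cong \cQ^\sm$ from \pref{th:main1}.
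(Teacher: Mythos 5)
Your reduction is exactly the paper's: both arguments come down to the fact that $\cR^\ss$ and $\cR^\sm$ are nonempty ($G$-invariant) open subschemes of the irreducible affine space $\cR$, hence share a dense open subset, so $[\cR^\sm/G] \cong \cM_{d,n}$ and $\cQ^\ss = [\cR^\ss/G]$ are birational; the Bertini argument for the nonemptiness of $\cR^\sm$ is also the one the paper uses. The divergence, and the problem, is in how you certify that $\cR^\ss \ne \emptyset$. The paper does this in one line: by \cite[Theorems 14.1.3 and 14.1.4]{Gelfand-Kapranov-Zelevinsky_DRMD} the hyperdeterminant of the relevant format is a nontrivial $G'$-invariant polynomial, so its nonvanishing locus is a nonempty open subset of the semistable locus. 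You instead propose the much stronger inclusion $\cR^\sm \subseteq \cR^\ss$ and defer its proof to a Hilbert--Mumford analysis that you yourself label ``the technical heart'' without carrying it out. As written, that step is a program, not a proof: the assertion that the lowest-weight truncation of the equations under a destabilizing one-parameter subgroup of $\prod_i \GL(V_i)$ must cut out something singular or of the wrong dimension is precisely what would need to be established, and the general principle ``smooth complete intersections are GIT-semistable'' is delicate in low degree (already for hypersurfaces it fails for hyperplanes and is borderline for quadrics), so one cannot simply invoke it for multidegree $(\bsone^d)$ complete intersections in $\bP(V_1) \times \cdots \times \bP(V_{n-1})$ under the product group without real work.

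The fix is easy and does not require your stronger claim: since only \emph{nonemptiness} of $\cR^\ss$ is needed, replace the entire Hilbert--Mumford discussion by the observation that a single nonconstant $G'$-invariant homogeneous polynomial on $\cR$ exists (the hyperdeterminant), whence some point of $\cR$ is semistable. If you do want the inclusion $\cR^\sm \subseteq \cR^\ss$ --- which would give the sharper statement that every smooth member of the family defines a semistable SLOCC class --- you must actually supply the weight bookkeeping or a reference for it; in its current form the proposal has a genuine gap at exactly the point where the paper's citation does the work.
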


\begin{proof}
Since the hyperdeterminant of format $2 \times \cdots \times 2$
is non-trivial and $G'$-invariant
by \cite[Theorems 14.1.3 and 14.1.4]{Gelfand-Kapranov-Zelevinsky_DRMD},
the semi-stable locus $\cR^\ss$ is non-empty.
Since semi-stability is an open condition,
$\cR^\ss$ is a non-empty open subscheme of $\cR$.
The theorem of Bertini
shows that $\cR^\sm$ is a non-empty open subset of $\cR$.
Hence $\cR^\ss$ and $\cR^\sm$ has a common open dense subset,
so that the stacks $\cM_{d,n} \cong [\cR^\sm / G]$
and $\cQ^\ss=[\cR^\ss/G]$ are birational.
\end{proof}

Let $[\cR^\ss/G']$ be the stack obtained from $\cQ^\ss$
by removing the generic stabilizer.
The following proposition shows that
the forgetful morphism from $\cM_{d,n}$
to the moduli space of Calabi-Yau manifolds
(without any additional structure) is dominant
on an irreducible component:

\begin{lemma}
One has $\dim H^1(T_Y) = \dim [\cR^\ss/G']$
if $\dim Y \ge 3$.
\end{lemma}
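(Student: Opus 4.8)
The plan is to compute $H^1(T_Y)$ via the Kodaira--Spencer map of the tautological family $\cY_{\cR^\sm}\to\cR^\sm$ and to identify its kernel with the tangent space to the $G$-orbit. Write $P=\bP(V_1)\times\cdots\times\bP(V_{n-1})$, so that $Y=Y_\eta\subset P$ is the complete intersection cut out by the $d$ sections spanning $V_\eta$, each of multidegree $\bsone$. Its conormal bundle is $\wedge^\bullet$ of $V_n^\vee\otimes\cO_P(-\bsone)$, so the normal bundle is $N_{Y/P}\cong V_n\otimes\cO_Y(\bsone)$, and the deformation sequence $0\to T_Y\to T_P|_Y\to N_{Y/P}\to 0$ presents Kodaira--Spencer as the composite $T_\eta\cR=\cR\to H^0(N_{Y/P})\xrightarrow{\delta}H^1(T_Y)$, where the first arrow is post-composition with the surjection $V_1\otimes\cdots\otimes V_{n-1}\to H^0(\cO_Y(\bsone))$ (it is surjective, with kernel $V_n\otimes V_\eta$).

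First I would compute the ambient cohomology by tensoring the Koszul resolution of $\cO_Y$ with $\cO_P(\bsone)$ and with $T_P=\bigoplus_{i=1}^{n-1}\mathrm{pr}_i^*T_{\bP(V_i)}$, and applying Künneth together with Bott vanishing on each factor $\bP(V_i)=\bP^{d-1}$. Only the two extreme Koszul terms survive: one finds $H^{>0}(\cO_Y(\bsone))=0$ with $H^0(\cO_Y(\bsone))=(V_1\otimes\cdots\otimes V_{n-1})/V_\eta$, hence $H^{>0}(N_{Y/P})=0$ and $\dim H^0(N_{Y/P})=d^n-d^2$; likewise $H^0(T_P|_Y)=H^0(T_P)=\bigoplus_{i=1}^{n-1}\operatorname{Lie}\mathrm{PGL}(V_i)$, of dimension $(n-1)(d^2-1)$. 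The top term $\wedge^dV_n^\vee\otimes T_P(-d\bsone)$ contributes cohomology only in total degree $\dim Y-1$; this is exactly where the hypothesis $\dim Y\ge 3$ enters, since it forces $\dim Y-1\ge 2$ and therefore $H^1(T_P|_Y)=0$.

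With $H^1(N_{Y/P})=0$ and $H^1(T_P|_Y)=0$, the long exact sequence yields $H^1(T_Y)\cong\operatorname{coker}\bigl(H^0(T_P|_Y)\to H^0(N_{Y/P})\bigr)$ and $H^0(T_Y)\cong\ker\bigl(H^0(T_P|_Y)\to H^0(N_{Y/P})\bigr)$; in particular $\delta$, and hence Kodaira--Spencer, is surjective. Taking dimensions gives $\dim H^1(T_Y)=(d^n-d^2)-(n-1)(d^2-1)+\dim H^0(T_Y)$, which simplifies to $d^n-nd^2+(n-1)=\dim\cR-\dim G'=\dim[\cR^\ss/G']$ precisely when $H^0(T_Y)=0$. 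So everything reduces to the vanishing of the global vector fields on $Y$.

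The crux is therefore to prove $H^0(T_Y)=0$ for general $\eta$, and I would deduce it from the fact that $H$ is the generic stabiliser. The map $H^0(T_P|_Y)\to H^0(N_{Y/P})$ is the infinitesimal action of $\prod_{i=1}^{n-1}\mathrm{PGL}(V_i)$ on the subspace $V_\eta$, so $H^0(T_Y)=\ker\delta$ is the Lie algebra of the stabiliser of $V_\eta$ in $\bigoplus_{i=1}^{n-1}\operatorname{Lie}\mathrm{PGL}(V_i)$. Comparing the two conditions ``$(X_1,\dots,X_n)$ fixes $\eta$'' and ``$(X_1,\dots,X_{n-1})$ preserves $V_\eta$'' shows that the projection to the first $n-1$ factors identifies $\operatorname{Lie}(\operatorname{Stab}_G(\eta))$ with the stabiliser of $V_\eta$ in $\bigoplus_{i=1}^{n-1}\operatorname{Lie}\GL(V_i)$, because $X_n$ is uniquely determined by the remaining $X_i$ through the isomorphism $\eta\colon V_n^\vee\simto V_\eta$. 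For general $\eta$ the left-hand side is $\operatorname{Lie}(H)$, which under this identification is exactly the $(n-1)$-dimensional space of scalar tuples; hence the stabiliser of $V_\eta$ consists only of scalars and maps to $0$ in $\bigoplus_{i=1}^{n-1}\operatorname{Lie}\mathrm{PGL}(V_i)$, giving $H^0(T_Y)=0$. I expect the two places demanding the most care to be the Koszul/Bott bookkeeping that isolates $H^1(T_P|_Y)=0$ under $\dim Y\ge 3$, and this last identification of the infinitesimal stabiliser of $V_\eta$ with $\operatorname{Lie}(H)$.
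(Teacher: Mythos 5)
Your proposal follows the same skeleton as the paper's proof: the normal-bundle sequence $0\to\cT_Y\to\cT_\bP|_Y\to\cN_{Y/\bP}\to 0$ with $\cN_{Y/\bP}\cong\cO_Y(\bsone)^{\oplus d}$, the Koszul resolution giving $\dim H^0(\cN_{Y/\bP})=d(d^{n-1}-d)$, the identification of $H^0(\cT_\bP|_Y)$ with the global vector fields of the ambient space, of dimension $(n-1)(d^2-1)$, and the same final arithmetic $d^n-nd^2+n-1=\dim\cR-\dim G'$. You diverge at exactly the two vanishing statements. For $H^1(\cT_\bP|_Y)=0$ you run the Koszul/K\"unneth/Bott computation directly on $\cT_\bP$ and locate the only surviving higher term in degree $\dim Y-1$; the paper instead restricts the Euler sequence to $Y$ and reduces to $H^1(\cO_Y(\bse_i))=0$ and $H^2(\cO_Y)=0$, which is where $\dim Y\ge 3$ enters for them. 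These are equivalent in substance. The more genuine difference is $H^0(\cT_Y)=0$: the paper quotes the standard fact that a Calabi--Yau manifold of dimension at least $2$ carries no global vector fields, valid for every smooth fibre, whereas you derive it for generic $\eta$ by identifying $\ker\bigl(H^0(\cT_\bP|_Y)\to H^0(\cN_{Y/\bP})\bigr)$ with the infinitesimal preserver of $V_\eta$ modulo scalars and matching that, via the unique determination of $X_n$ through $\eta\colon V_n^\vee\simto V_\eta$, with the Lie algebra of the generic stabiliser $H$. That argument is correct and has the virtue of being self-contained relative to the group action (and of making the surjectivity of the Kodaira--Spencer map explicit), but it yields the vanishing only at general points and leans on the paper's unproved assertion that $H$ is the generic stabiliser; the paper's one-line appeal to the Calabi--Yau condition is shorter and stronger. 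Both routes use the hypothesis $\dim Y\ge 3$ in the same essential way and arrive at the same count.
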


\begin{proof}
Let 
$
 \pi_i : \bP = \bP(V_1) \times \cdots \times \bP(V_{n-1})
  \to \bP(V_i)
$
be the $i$-th projection
and set
\begin{align*}
 \cO_\bP(\bsa)
  &= \cO_{\bP(V_1)}(a_1) \boxtimes \cdots
   \boxtimes \cP_{\bP(V_{n-1})}(a_{n-1}) \\
  &= \bigotimes_{i=1}^{n-1} \pi_i^* \cO_{\bP(V_i)}(a_i)
\end{align*}
for
$
 \bsa = (a_1,\ldots,a_{n-1}) \in \bZ^{n-1}.
$
Let $\{ \bse_i \}_{i=1}^{n-1}$ be the standard basis
of $\bZ^{n-1}$ and set $\bsone = e_1 + \cdots + e_{n-1}$.
In the exact sequence
\begin{align} \label{eq:normal}
 0 \to \cT_Y \to \cT_\bP|_Y
  \to \cN_{Y/\bP}
  \to 0,
\end{align}
one has
$
 \cN_{Y / \bP} \cong \cO_Y(\bsone)^{\oplus d}
$
since $Y$ is a complete intersection
of degree $(\bsone^d)$.
By tensoring the Koszul resolution
\begin{multline} \label{eq:Kozsul_Y}
 0
  \to\cO_\bP(-d \cdot \bsone)
  \to \cdots
  \to \cO_\bP(-2 \cdot \bsone)^{\oplus \binom{d}{2}}
  \to \cO_\bP(-\bsone)^{\oplus d}
  \to \cO_\bP \to \cO_Y \to 0
\end{multline}
with $\cO_Y(\bsone)$
and taking the global section,
one obtains
\begin{align}
 \dim H^0(\cO_Y(\bsone)) = d^{n-1}-d,
\end{align}
so that
\begin{align}
 \dim H^0(\cN_{Y/\bP})
  = \dim H^0(\cO_Y(\bsone))^{\oplus d}
  = d(d^{n-1}-d).
\end{align}
The pull-back of the Euler sequence on $\bP(V_i)$ gives
\begin{align}
 0 \to \cO_\bP
  \to \cO_\bP(\bse_i)^{\oplus d}
  \to \pi_i^* \cT_{\bP(V_i)}
  \to 0,
\end{align}
which together with
$
 \cT_\bP = \bigoplus_{i=1}^{n-1} \pi_i^* \cT_{\bP(V_i)}
$
gives
\begin{align}
 0 \to \cO_\bP^{\oplus (n-1)}
  \to \bigoplus_{i=1}^{n-1} \cO_\bP(\bse_i)^{\oplus d}
  \to \cT_\bP
  \to 0.
\end{align}
By restricting to $Y$, one obtains
\begin{align}
 0 \to \cO_Y^{\oplus (n-1)}
  \to \bigoplus_{i=1}^{n-1} \cO_Y(\bse_i)^{\oplus d}
  \to \cT_\bP|_Y
  \to 0.
\end{align}
One can show
from \eqref{eq:Kozsul_Y}
that $H^1(\cO_Y(\bse_i))=0$,
so that
\begin{align}
 \dim H^0(\cT_\bP|_Y)
  &= (n-1)d^2 - (n-1).
\end{align}
and
\begin{align}
 H^1(\cT_\bP|_Y) \cong H^2(\cO_Y^{\oplus (n-1)}).
\end{align}
One has
$
 H^0(\cT_Y) = 0
$
if $\dim Y \ge 2$,
and
$
 H^2(\cO_Y) = 0
$
if $\dim Y \ge 3$.
The long exact sequence
associated with \eqref{eq:normal} gives
\begin{align}
 0
  \to H^0(\cT_\bP|_Y)
  \to H^0(\cN_{Y/\bP})
  \to H^1(\cT_Y)
  \to 0.
\end{align}
It follows that
\begin{align*}
 \dim H^1(\cT_Y)
  &= \dim H^0(\cN_{Y/\bP}) - \dim H^0(\cT_\bP|_Y) \\
  &= d(d^{n-1}-d) - ((n-1)d^2-(n-1)) \\
  &= d^n-n d^2 + n - 1,
\end{align*}
which coincides with
\begin{align*}
 \dim [\cR^\ss/G']
  &= \dim \cR - \dim G' \\
  &= d^n - (n d^2-n+1) \\
  &= d^n-n d^2+n-1.
\end{align*}
\end{proof}

\section{Noncommutative algebraic geometry}
 \label{sc:ncag}

A $\bZ$-algebra is an algebra of the form
$
 A = \bigoplus_{i,j \in \bZ} A_{ij}
$
with the property that the product satisfies
$A_{ij} A_{kl}= 0$ for $j \ne k$ and
$
 A_{ij} A_{jk} \subset A_{ik}.
$
We assume that $A_{ii}$ for any $i \in \bZ$
has an element $e_i$, called the \emph{local unit},
which satisfies $f e_i = f$ and $e_i g = g$
for any $f \in A_{ki}$ and $g \in A_{ij}$.
A graded $A$-module
is a right $A$-module of the form
$M = \bigoplus_{i \in \bZ} M_i$
such that $M_i A_{kl} = 0$ for $k \ne i$ and
$M_i A_{ij} \subset A_j$.
An $A$-module is a \emph{torsion}
if it is a colimit of modules $M$
satisfying $M_i = 0$ for sufficiently large $i$.
Let $P_i = e_i A$ and $S_i = e_i A e_i$
be right $A$-modules.
A $\bZ$-algebra $A$ is \emph{positively graded}
if $A_{ij} = 0$ for $i < j$.
A positively graded $\bZ$-algebra $A$ is \emph{connected}
if $\dim A_{ij} < \infty$ and
$A_{ii} = \bC \, e_i$ for any $i, j \in \bZ$.
A connected $\bZ$-algebra $A$ is \emph{AS-regular} if
\begin{itemize}
 \item
$\dim A_{ij}$ is bounded by a polynomial in $j-i$,
 \item
the projective dimension of $S_i$ is bounded
by a constant independent of $i$,
 \item
$\sum_{j,k} \dim \Ext_{\Gr A}^j(S_k, P_i) = 1$
for any $i \in \bZ$.
\end{itemize}
An AS-regular $\bZ$-algebra $A$ is
a \emph{3-dimensional quadratic AS-regular $\bZ$-algebra}
if the minimal resolution of $S_i$ is of the form
\begin{align}
 0 \to P_{i+3} \to P_{i+2}^3 \to P_{i+1}^3 \to P_i \to S_i \to 0.
\end{align}
An AS-regular $\bZ$-algebra $A$ is
a \emph{3-dimensional cubic AS-regular $\bZ$-algebra}
if the minimal resolution of $S_i$ is of the form
\begin{align}
 0 \to P_{i+4} \to P_{i+3}^2 \to P_{i+1}^2 \to P_i \to S_i \to 0.
\end{align}
Let
$
 \Qgr A = \Gr A / \Tor A
$
be the quotient abelian category
of the abelian category $\Gr A$
of graded $A$-modules
by the Serre subcategory $\Tor A$
consisting of torsion modules.
A \emph{noncommutative projective plane}
is an abelian category of the form $\Qgr A$
for a 3-dimensional quadratic AS-regular $\bZ$-algebra.
A \emph{noncommutative quadric surface} is defined similarly
as $\Qgr A$ for a 3-dimensional cubic AS-regular $\bZ$-algebra
\cite[Definition 3.2]{MR2836401}.

The classification of 3-dimensional regular quadratic $\bZ$-algebras
can be found in \cite[Proposition 3.3]{MR2836401},
which is essentially due to \cite{MR1230966}.
They are divided into the \emph{linear} case and the \emph{elliptic} case.
The abelian category $\Qgr A$ is equivalent
to $\Qcoh \bP^2$ in the linear case.
The elliptic cases are classified by triples
$(C, L_1, L_2)$ of a curve $C$
and a pair $(L_1, L_2)$ of line bundles such that
\begin{itemize}
 \item
the curve $C$ is embedded as a divisor of degree 3
in $\bP^2$ by global sections of both $L_1$ and $L_2$,
 \item
$\deg(L_1|_D) = \deg(L_2|_D)$
for every irreducible component $D$ of $C$, and
 \item
$L_1$ is not isomorphic to $L_2$.
\end{itemize}
Accordingly,
the moduli stack $\cN_3$ of elliptic triples
is defined as a category
whose object $(\varphi\colon  \cY \to S, (\cL_1, \cL_2))$
consists of a flat morphism $\varphi\colon  \cY \to S$ of schemes
and a pair $(\cL_1, \cL_2)$ of line bundles on $\cY$
satisfying the above three conditions
on geometric points of $S$.
The morphisms are defined
in the same way as that for $\cM_{d,n}$.

\begin{theorem} \label{th:ell-triple}
The moduli stack $\cM_{3,3}$ is birational
to the moduli stack $\cN_3$.
\end{theorem}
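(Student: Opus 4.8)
The plan is to reduce, via \pref{th:main1}, to a comparison between $[\cR^\sm/G]$ and $\cN_3$, and then to write down an explicit dictionary between the complete-intersection data of $\cM_{3,3}$ and the embedding data of an elliptic triple. By \pref{th:main1} we have $\cM_{3,3} \cong \cQ^\sm = [\cR^\sm/G]$, where $\cR = V_1 \otimes V_2 \otimes V_3$ with $\dim V_i = 3$; it therefore suffices to produce mutually inverse birational maps between $[\cR^\sm/G]$ and $\cN_3$. For $(n,d)=(3,3)$ the Calabi--Yau dimension $(n-1)(d-1)-d$ equals $1$, so over a geometric point the fiber $Y_\eta \subset \bP(V_1) \times \bP(V_2) \cong \bP^2 \times \bP^2$ is a smooth complete intersection of three divisors of bidegree $\bsone = (1,1)$, and by adjunction $\omega_{Y_\eta}$ is trivial; hence $Y_\eta$ is an elliptic curve.

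First I would construct the forward map. Given $\eta \in \cR^\sm$, set $L_1 = \cO_{Y_\eta}(\bse_1)$ and $L_2 = \cO_{Y_\eta}(\bse_2)$, the restrictions of the two tautological bundles. Writing $h_i$ for the hyperplane class pulled back from $\bP(V_i)$, the complete-intersection class is $[Y_\eta] = (h_1+h_2)^3 = 3 h_1^2 h_2 + 3 h_1 h_2^2$ in $\bP^2 \times \bP^2$, whence $\deg L_1 = \deg L_2 = 3$. Since a degree-$3$ line bundle on an elliptic curve is very ample with a three-dimensional space of sections, each projection $Y_\eta \to \bP(V_i)$ is a closed embedding onto a plane cubic and $H^0(L_i) \cong V_i$. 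This verifies the first defining condition of an elliptic triple, while the second holds automatically because $Y_\eta$ is irreducible; thus $(Y_\eta, L_1, L_2)$ defines a point of $\cN_3$ on the dense open locus where $L_1 \not\cong L_2$. The construction is $G$-equivariant, so it descends to a morphism from a dense open substack of $[\cR^\sm/G]$ to $\cN_3$.

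Next I would construct the inverse. Given an elliptic triple $(C, L_1, L_2)$, consider the multiplication map $m \colon H^0(L_1) \otimes H^0(L_2) \to H^0(L_1 \otimes L_2)$, whose source and target have dimensions $9$ and $6$. Because $\deg L_i = 3 = 2g+1$ for $g=1$, classical normal-generation results for line bundles on elliptic curves guarantee that $m$ is surjective, so $\ker m$ is three-dimensional. A choice of bases identifies $H^0(L_i)$ with $V_i$ and presents $\ker m$ as the image of a map $V_3^\vee \to V_1 \otimes V_2$, that is, as $V_\eta$ for some $\eta \in \cR$, the ambiguity in the bases being exactly the $G$-action. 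Under the identification $H^0(\bP^2 \times \bP^2, \cO(\bsone)) \cong H^0(L_1) \otimes H^0(L_2)$, an element of $\ker m$ is precisely a form of bidegree $(1,1)$ whose restriction to $C$ vanishes, so $C \subseteq Y_\eta$.

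Finally I would check that the two maps are mutually inverse on a common dense open and isolate the one nontrivial point. For the backward-then-forward composite the only thing to verify is $Y_\eta = C$: once the three forms spanning $V_\eta$ form a regular sequence, $Y_\eta$ is Cohen--Macaulay of pure dimension $1$ with $[Y_\eta] = (h_1+h_2)^3 = 3 h_1^2 h_2 + 3 h_1 h_2^2 = [C]$, and the inclusion $C \subseteq Y_\eta$ together with this equality of cycle classes forces $Y_\eta = C$ as schemes; smoothness of $C$ then places $\eta$ in $\cR^\sm$. The forward-then-backward composite is immediate, since $V_\eta$ is by construction the space of $(1,1)$-forms vanishing on $Y_\eta$, which the previous paragraph identifies with $\ker m$. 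The hard part, and the only place where restricting to a dense open is genuinely needed, is the regular-sequence claim — equivalently, that the homogeneous ideal of $C \subset \bP^2 \times \bP^2$ is generated in bidegree $(1,1)$. I expect to handle this as a nonempty open condition: it holds on the image of $\cR^\sm$ under the forward map, which is dense by \pref{pr:main2} together with Bertini, and openness of smoothness and of the complete-intersection property then spreads it over a dense open of $\cN_3$; alternatively one can deduce it from the projective normality of the degree-$6$ elliptic normal curve obtained by post-composing with the Segre embedding.
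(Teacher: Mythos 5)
Your route is genuinely different from the paper's: you pass through the tensor description $\cM_{3,3}\cong[\cR^\sm/G]$ of \pref{th:main1} and build an explicit correspondence $\eta\mapsto(Y_\eta,\cO_{Y_\eta}(\bse_1),\cO_{Y_\eta}(\bse_2))$ with inverse $(C,L_1,L_2)\mapsto\ker\bigl(H^0(L_1)\otimes H^0(L_2)\to H^0(L_1\otimes L_2)\bigr)$, whereas the paper never returns to $\cR$: it matches the open substack of $\cM_{3,3}$ where $\cL_1\not\cong\cL_2$ (up to a twist from the base) directly with the open substack of $\cN_3$ where $\varphi$ is smooth, using only Riemann--Roch and very ampleness of degree-$3$ line bundles on an elliptic curve. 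Your forward map and the check that it lands in $\cN_3$ are fine, as is the observation that the forward-then-backward composite is controlled by the identity $V_\eta=\ker m$.

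The gap is exactly where you flag it, and the proposed repair does not work as stated. That the image of the forward map is dense in $\cN_3$ does not follow from \pref{pr:main2} together with Bertini: those results compare $\cM_{d,n}$ with $\cQ^\ss$ and say nothing about the morphism to $\cN_3$ being dominant, so as written this is a non sequitur --- and it is dangerously close to circular, since dominance of the forward map is essentially what the inverse construction is meant to certify. The ``alternatively'' also falls short: projective normality of the degree-$6$ model gives that $C$ is cut out by quadrics in $\bP^5$, i.e.\ by forms of bidegree $(2,2)$ modulo $\ker m$, which is weaker than generation of the bihomogeneous ideal in bidegree $(1,1)$. Two honest repairs: (i) combine the injectivity you already have (the triple determines $\ker m=V_\eta$, hence $\eta$ up to $G$) with irreducibility and a dimension count (both sides are irreducible of dimension $2$ after rigidifying the generic stabilizer) to conclude the forward map is dominant, whence the complete-intersection property spreads over a dense open of $\cN_3$ by openness; or (ii) prove directly that for every smooth elliptic triple with $L_1\not\cong L_2$ the curve is the scheme-theoretic intersection of the three $(1,1)$-forms --- this is the classical fact underlying the Artin--Tate--Van den Bergh and Bondal--Polishchuk classification cited in the paper, and it is what the paper's own (very terse) proof implicitly relies on when it asserts that a smooth $\cN_3$-object yields an $\cM_{3,3}$-object.
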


\begin{proof}
Let $L$ be a line bundle on an elliptic curve.
Serre duality
\begin{align}
 H^1(L)^\vee \cong H^0(L^\vee)
\end{align}
shows that $h^1(L) = 0$ if $h^0(L) > 1$, and
Riemann-Roch theorem
\begin{align}
 h^0(L) - h^1(L) = \deg L
\end{align}
shows that $\deg L=3$ if and only if $h^0(L) = 3$.
If this is the case, then $L$ is very ample (%
cf.~e.g.~\cite[Corollary IV.3.2(b)]{Hartshorne}).
It follows that the open substack of $\cM_{3,3}$
consisting of objects $(\varphi\colon  \cY \to S, (\cL_1, \cL_2))$
such that $\cL_1$ is not isomorphic
to the tensor product of $\cL_2$ and a pull-back
of a line bundle on $S$
is isomorphic to the open substack of $\cN_3$
consisting of objects $(\varphi\colon  \cY \to S, (\cL_1, \cL_2))$
such that $\varphi$ is smooth,
and \pref{th:ell-triple} is proved.
\end{proof}


The classification of 3-dimensional cubic AS-regular $\bZ$-algebras
is given in \cite[Proposition 4.2]{MR2836401}.
They are divided into the \emph{linear} case and the \emph{elliptic} case.
The abelian category $\Qgr A$ is equivalent
to $\Qcoh \bP^1 \times \bP^1$ in the linear case.
The elliptic cases are classified by quadruples
$(C, L_1, L_2, L_3)$ of a curve $C$
and line bundles $L_1$, $L_2$, and $L_3$ such that
\begin{itemize}
 \item
the curve $C$ is embedded as a divisor of bidegree $(2,2)$
in $\bP^1 \times \bP^1$
by global sections of both $(L_1, L_2)$
and $(L_2, L_3)$,
 \item
$\deg(L_1|_D) = \deg(L_3|_D)$
for every irreducible component $D$ of $C$, and
 \item
$L_1$ is not isomorphic to $L_3$.
\end{itemize}
The moduli stack $\cN_4$ of elliptic quadruples
is defined as a category
whose object $(\varphi\colon  \cY \to S, (\cL_1, \cL_2, \cL_3))$
consists of a flat morphism $\varphi\colon  \cY \to S$ of schemes
and a triple $(\cL_1, \cL_2, \cL_3)$ of line bundles on $\cY$
satisfying the above three conditions
on geometric point of $S$.

\begin{lemma} \label{lm:segre}
Let $(L, L')$ be a pair of line bundles of degree 2
on an elliptic curve $C$.
Then the morphism
$\varphi_{(L,L')} : C \to \bP(H^0(L)) \times \bP(H^0(L'))$
defined by $(L, L')$ is an embedding
if and only if $L$ is not isomorphic to $L'$.
\end{lemma}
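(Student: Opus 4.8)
The plan is to use the classical description of the degree-two map associated to a line bundle on an elliptic curve in terms of the group law of $C$. By Riemann--Roch and Serre duality (exactly as in the proof of \pref{th:ell-triple}), a line bundle of degree $2$ on $C$ satisfies $h^0 = 2$ and $h^1 = 0$, so $\bP(H^0(L)) \cong \bP^1$, and since $2 = 2g(C)$ such a line bundle is base-point free. Hence $L$ and $L'$ define morphisms $\varphi_L, \varphi_{L'} \colon C \to \bP^1$, each finite of degree $2$, and together a morphism $\varphi_{(L,L')} \colon C \to \bP^1 \times \bP^1$. As $C$ is a smooth projective curve, $\varphi_{(L,L')}$ is a closed embedding if and only if it is injective on closed points and its differential is injective at every point; I will check these two conditions (cf.~\cite[Proposition IV.3.1]{Hartshorne} for the analogous criterion for a single linear system).

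The key step is to make the fibres and ramification of $\varphi_L$ explicit. Fix a group structure on $C$ with identity $O$, written additively. Every degree-two line bundle is isomorphic to $\cO_C(O + s)$ for a unique point $s \in C$, and this assignment $L \mapsto s_L$ is a bijection from the set of isomorphism classes of degree-two line bundles to $C$; in particular $L \cong L'$ if and only if $s_L = s_{L'}$. The effective divisors in the complete linear system of $L$ are exactly the divisors $x + (s_L - x)$ for $x \in C$, so the fibre of $\varphi_L$ through a point $x$ is $\{x, \, s_L - x\}$. Thus $\varphi_L$ identifies precisely the pairs of points exchanged by the involution $\iota_L \colon x \mapsto s_L - x$, and (being finite of degree $2$ in characteristic zero) its differential vanishes exactly at the four fixed points of $\iota_L$, i.e.\ at the solutions of $2x = s_L$.

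Granting this, both implications follow quickly. If $L \cong L'$, then $s_L = s_{L'}$ and $\varphi_{(L,L')}$ sends the distinct points $x$ and $s_L - x$ to the same point for generic $x$, so it is not injective and hence not an embedding. Conversely, assume $L \not\cong L'$, so $s_L \neq s_{L'}$. If $\varphi_{(L,L')}(x) = \varphi_{(L,L')}(x')$ with $x \neq x'$, then $x' = s_L - x$ and $x' = s_{L'} - x$, forcing $s_L = s_{L'}$, a contradiction; hence $\varphi_{(L,L')}$ is injective on closed points. Likewise, if its differential vanished at $x$, then $d\varphi_L$ and $d\varphi_{L'}$ would both vanish at $x$, giving $2x = s_L$ and $2x = s_{L'}$ and again $s_L = s_{L'}$, a contradiction; hence $\varphi_{(L,L')}$ is an immersion. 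An injective immersion from a smooth projective curve is a closed embedding, which proves the lemma.

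I expect the only genuinely delicate point to be the immersion condition rather than injectivity on points: each map $\varphi_L$ is ramified at four points where $d\varphi_L$ truly vanishes, so injectivity on points alone does not suffice. The content is that the two ramification divisors, each consisting of the four ``halves'' of $s_L$ (respectively $s_{L'}$), are disjoint precisely when $s_L \neq s_{L'}$, i.e.\ precisely when $L \not\cong L'$. Setting up the bijection $L \mapsto s_L$ and the group-theoretic description of the fibres is the step that carries all the weight.
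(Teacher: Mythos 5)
Your proof is correct, but it takes a genuinely different route from the paper's. The paper argues via linear algebra on global sections: it considers the multiplication map $\mu : H^0(L) \otimes H^0(L') \to H^0(L \otimes L')$ between two $4$-dimensional spaces, shows that a nonzero kernel element $s \otimes a + t \otimes b$ forces $L \cong L'$ (since $sa = -tb$ makes $s$ divide $b$ while $\deg L = \deg L'$), and in the surjective case composes $\varphi_{(L,L')}$ with the Segre embedding to recover the morphism $\varphi_{L \otimes L'}$ attached to the very ample degree-$4$ bundle $L \otimes L'$; an embedding that factors through $\varphi_{(L,L')}$ forces $\varphi_{(L,L')}$ itself to be an embedding, so neither injectivity on points nor the differential ever has to be checked separately. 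You instead use the group law: the bijection $L \mapsto s_L$ with $L \cong \cO_C(O + s_L)$, the description of the fibres of $\varphi_L$ as orbits of the involution $x \mapsto s_L - x$, and the identification of the ramification locus with the four solutions of $2x = s_L$, then verify injectivity and immersivity by hand. Both arguments are complete; yours is more explicit and isolates exactly where the embedding could fail (the two ramification divisors must be disjoint, which happens precisely when $s_L \neq s_{L'}$), while the paper's is shorter, stays within the linear-algebraic framework used throughout the rest of the paper, and avoids invoking the group structure of $C$ altogether. Your observation that the delicate point is the differential rather than set-theoretic injectivity is accurate, and your argument handles it correctly.
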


\begin{proof}
It is clear that $\varphi_{(L,L)}$
factors through the diagonal embedding
\begin{align}
 \bP(H^0(L)) \to \bP(H^0(L)) \times \bP(H^0(L)),
\end{align}
and hence can never be an embedding.
Consider the composition map
\begin{align}
 \mu : H^0(L) \otimes H^0(L')
  \to H^0(L \otimes L').
\end{align}
Note that $h^0(L) = h^0(L') = 2$ and
$h^0(L \otimes L') = 4$.
If the map $\mu$ is surjective,
then the composition of the morphisms
\begin{align}
 C
  \xto{\ \varphi_{(L,L')}\ } \bP(H^0(L)) \times \bP(H^0(L'))
  \xto{\text{ Segre }} \bP(H^0(L) \otimes H^0(L'))
  \xto{\ \mu_*\ } \bP(H^0(L \otimes L'))
\end{align}
gives the morphism
$
 \varphi_{L \otimes L'} : C \to \bP(H^0(L \otimes L'))
$
associated with the line bundle $L \otimes L'$.
Here, the second arrow is the Segre embedding,
and the third arrow is induced by $\mu$.
Note that the line bundle $L \otimes L'$ has degree 4,
and hence very ample,
so that $\varphi_{L \otimes L'}$ is an embedding.
The fact that this embedding factors
through $\varphi_{(L,L')}$ shows that
the morphism $\varphi_{(L,L')}$ is an embedding.

Assume that the map $\mu$ is not surjective.
Choose a basis $\{ s, t \}$ of $H^0(L)$.
If one takes an element of $\ker \mu$,
then it can be written as
$
 s \otimes a + t \otimes b \in \ker \mu
$
for some $a, b \in H^0(L')$.
It follows that $s a = - t b$ in $H^0(L \otimes L')$,
which implies that $s$ divides $b$.
Since $\deg L = \deg L'$,
this is possible only if $L$ and $L'$ are isomorphic,
and \pref{lm:segre} is proved.
\end{proof}

\begin{theorem} \label{th:ell-quadruple}
The moduli stack $\cM_{4,2}$ is birational
to the moduli stack $\cN_4$.
\end{theorem}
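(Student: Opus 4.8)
The plan is to follow the proof of \pref{th:ell-triple}, replacing the single very ample embedding used there by the pair embeddings furnished by \pref{lm:segre}. First I would record the numerology on a smooth fibre: if $C$ is an elliptic curve and $L$ a line bundle on it, then Serre duality $H^1(L)^\vee \cong H^0(L^\vee)$ and the Riemann--Roch theorem $h^0(L)-h^1(L)=\deg L$ show that $\deg L = 2$ if and only if $h^0(L)=2$ and $h^1(L)=0$, and that such an $L$ is base-point-free but not very ample. Hence an object $(\varphi \colon \cY \to S,(\cL_1,\cL_2,\cL_3))$ of $\cM_{4,2}$ with smooth fibres is a genus-one fibration carrying three relatively degree-two line bundles, realised inside the $(\bP^1)^3$-bundle $\bP(\varphi_*\cL_1)\times_S \bP(\varphi_*\cL_2)\times_S \bP(\varphi_*\cL_3)$ as a complete intersection of two divisors of tridegree $\bsone=(1,1,1)$. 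As in the triple case I would restrict to the open substack on which the three line bundles are pairwise fibrewise non-isomorphic, i.e.\ $\cL_i \not\cong \cL_j \otimes \varphi^* N$ for all $i \ne j$ and all line bundles $N$ on $S$; this is an open condition, which I expect to be dense.

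Next I would define the comparison functor to $\cN_4$ by forgetting the ambient embedding and keeping $(\varphi \colon \cY \to S,(\cL_1,\cL_2,\cL_3))$. On the above open substack, \pref{lm:segre} applied in families to the pairs $(\cL_1,\cL_2)$ and $(\cL_2,\cL_3)$ shows that each pair embeds the fibres as curves of bidegree $(2,2)$ in $\bP^1 \times \bP^1$, which is the first condition defining $\cN_4$; the condition $\deg(L_1|_D)=\deg(L_3|_D)$ is automatic since the fibres are smooth and irreducible with $L_1,L_3$ of degree two, and $\cL_1 \not\cong \cL_3$ holds by the standing assumption. Conversely, a smooth object of $\cN_4$ has pairwise non-isomorphic line bundles, because the two embedding conditions give $L_1 \not\cong L_2$ and $L_2 \not\cong L_3$ through \pref{lm:segre} while $L_1 \not\cong L_3$ is imposed directly, so $(\cL_1,\cL_2,\cL_3)$ embeds the fibres into $(\bP^1)^3$.

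The main obstacle is to verify that the image of this last embedding is a complete intersection of two tridegree-$(1,1,1)$ divisors, so that a smooth $\cN_4$-object genuinely produces an object of $\cM_{4,2}$. I would analyse the multiplication map
\begin{align}
 \mu_3 \colon H^0(L_1) \otimes H^0(L_2) \otimes H^0(L_3)
  \to H^0(L_1 \otimes L_2 \otimes L_3)
\end{align}
on a smooth fibre, whose source has dimension $8$ and whose target has dimension $6$. Factoring it through $H^0(L_1 \otimes L_2) \otimes H^0(L_3)$, the first map is surjective by the computation in the proof of \pref{lm:segre} (which uses $L_1 \not\cong L_2$), while the second is surjective by the base-point-free pencil trick applied to the base-point-free pencil $H^0(L_3)$: its kernel is $H^0(L_1 \otimes L_2 \otimes L_3^\vee)$, of degree two and hence of dimension two, so its image has dimension $6$. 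Thus $\mu_3$ is surjective with a two-dimensional kernel; the two-dimensional kernel provides two tridegree-$(1,1,1)$ forms whose common vanishing locus is a complete intersection containing the image, and this complete intersection must coincide with the image by the adjunction argument in the proof of \pref{th:main1}.

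Finally I would check that the two constructions are mutually quasi-inverse, which is formal exactly as in \pref{th:ell-triple}. This identifies the pairwise-distinct smooth open substack of $\cM_{4,2}$ with the smooth open substack of $\cN_4$; as both are dense, \pref{th:ell-quadruple} follows.
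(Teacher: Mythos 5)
Your proposal is correct and follows essentially the same route as the paper: restrict to the open substack of $\cM_{4,2}$ where the three line bundles are pairwise distinct and the open substack of $\cN_4$ where $\varphi$ is smooth, and identify them using \pref{lm:segre} together with the adjunction formula. The paper's own proof is much terser and omits the verification that a smooth $\cN_4$-object really gives a complete intersection of two $(1,1,1)$-divisors in $\bP^1 \times \bP^1 \times \bP^1$ (equivalently, that the triple multiplication map is surjective with two-dimensional kernel), which you supply correctly via the base-point-free pencil trick.
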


\begin{proof}
Adjunction formula shows that
an elliptic curve embedded in $\bP^1 \times \bP^1$
must be a divisor of bidegree $(2,2)$.
\pref{lm:segre} shows that
the open substack of $\cM_{4,2}$
consisting of objects $(\varphi\colon  \cY \to S, (\cL_1, \cL_2, \cL_3))$
such that $\cL_1$, $\cL_2$ and $\cL_3$ are distinct
is isomorphic to the open substack of $\cN_4$
consisting of objects $(\varphi\colon  \cY \to S, (\cL_1, \cL_2, \cL_3))$
such that $\varphi$ is smooth.
\end{proof}

The geometry of the moduli stacks
$\cM_{3,3}$, $\cM_{4,2}$, $\cN_3$, $\cN_4$
and the corresponding compact moduli schemes
will be studied in more detail in
\cite{Abdelgadir-Okawa-Ueda_ncP2, Okawa-Ueda_ncquadric}
from the viewpoint of noncommutative algebraic geometry.

%
%
%
%
%

\bibliographystyle{amsalpha}
\bibliography{bibs}

\newcommand{\etalchar}[1]{$^{#1}$}
\def\cprime{$'$} \def\cprime{$'$}
\providecommand{\bysame}{\leavevmode\hbox to3em{\hrulefill}\thinspace}
\providecommand{\MR}{\relax\ifhmode\unskip\space\fi MR }
\providecommand{\MRhref}[2]{%
  \href{http://www.ams.org/mathscinet-getitem?mr=#1}{#2}
}
\providecommand{\href}[2]{#2}
\begin{thebibliography}{VDDMV02}

\bibitem[AOU]{Abdelgadir-Okawa-Ueda_ncP2}
Tarig Abdelgadir, Shinnosuke Okawa, and Kazushi Ueda, \emph{Compact moduli of
  non-commutative projective planes}, in preparation.

\bibitem[ATVdB90]{Artin_Tate_Van_den_Bergh}
M.~Artin, J.~Tate, and M.~Van~den Bergh, \emph{Some algebras associated to
  automorphisms of elliptic curves}, The {G}rothendieck {F}estschrift, {V}ol.\
  {I}, Progr. Math., vol.~86, Birkh\"auser Boston, Boston, MA, 1990,
  pp.~33--85. \MR{1086882 (92e:14002)}

\bibitem[BDD{\etalchar{+}}10]{MR2720335}
L.~Borsten, D.~Dahanayake, M.~J. Duff, A.~Marrani, and W.~Rubens,
  \emph{Four-qubit entanglement classification from string theory}, Phys. Rev.
  Lett. \textbf{105} (2010), no.~10, 100507, 4. \MR{2720335 (2011g:81190)}

\bibitem[BDL12]{MR2994241}
L.~Borsten, M.~J. Duff, and P.~L{\'e}vay, \emph{The black-hole/qubit
  correspondence: an up-to-date review}, Classical Quantum Gravity \textbf{29}
  (2012), no.~22, 224008, 80. \MR{2994241}

\bibitem[BH]{Bargava-Ho_CS}
M.~{Bhargava} and W.~{Ho}, \emph{{Coregular spaces and genus one curves}},
  arXiv:1306.4424.

\bibitem[BHK]{Bargava-Ho-Kumar_OP}
M.~{Bhargava}, W.~{Ho}, and A.~{Kumar}, \emph{{Orbit Parametrizations for K3
  Surfaces}}, arXiv:1312.0898.

\bibitem[BLTV04]{MR2105225}
Emmanuel Briand, Jean-Gabriel Luque, Jean-Yves Thibon, and Frank Verstraete,
  \emph{The moduli space of three-qutrit states}, J. Math. Phys. \textbf{45}
  (2004), no.~12, 4855--4867. \MR{2105225 (2005g:81055)}

\bibitem[BP93]{MR1230966}
A.~I. Bondal and A.~E. Polishchuk, \emph{Homological properties of associative
  algebras: the method of helices}, Izv. Ross. Akad. Nauk Ser. Mat. \textbf{57}
  (1993), no.~2, 3--50. \MR{1230966 (94m:16011)}

\bibitem[DVC00]{MR1804183}
W.~D{\"u}r, G.~Vidal, and J.~I. Cirac, \emph{Three qubits can be entangled in
  two inequivalent ways}, Phys. Rev. A (3) \textbf{62} (2000), no.~6, 062314,
  12. \MR{1804183 (2001j:81028)}

\bibitem[GKZ94]{Gelfand-Kapranov-Zelevinsky_DRMD}
I.~M. Gel{\cprime}fand, M.~M. Kapranov, and A.~V. Zelevinsky,
  \emph{Discriminants, resultants, and multidimensional determinants},
  Mathematics: Theory \& Applications, Birkh\"auser Boston Inc., Boston, MA,
  1994. \MR{MR1264417 (95e:14045)}

\bibitem[Har77]{Hartshorne}
Robin Hartshorne, \emph{Algebraic geometry}, Springer-Verlag, New York, 1977,
  Graduate Texts in Mathematics, No. 52. \MR{0463157 (57 \#3116)}

\bibitem[HLT12]{MR3050572}
Fr{\'e}d{\'e}ric Holweck, Jean-Gabriel Luque, and Jean-Yves Thibon,
  \emph{Geometric descriptions of entangled states by auxiliary varieties}, J.
  Math. Phys. \textbf{53} (2012), no.~10, 102203, 30. \MR{3050572}

\bibitem[Kly]{Klyachko_CSE}
Alexander Klyachko, \emph{Coherent states, entanglement, and geometric
  invariant theory}, arXiv:quant-ph/0206012.

\bibitem[L{\'e}v06]{MR2246703}
P{\'e}ter L{\'e}vay, \emph{On the geometry of four-qubit invariants}, J. Phys.
  A \textbf{39} (2006), no.~30, 9533--9545. \MR{2246703 (2007j:81037)}

\bibitem[L{\'e}v11]{levay2011two}
\bysame, \emph{Two-center black holes, qubits, and elliptic curves}, Physical
  Review D \textbf{84} (2011), no.~2, 025023.

\bibitem[LT03]{MR2039690}
Jean-Gabriel Luque and Jean-Yves Thibon, \emph{Polynomial invariants of four
  qubits}, Phys. Rev. A (3) \textbf{67} (2003), no.~4, 042303, 5. \MR{2039690
  (2004k:81098)}

\bibitem[MFK94]{Mumford-Fogarty-Kirwan}
D.~Mumford, J.~Fogarty, and F.~Kirwan, \emph{Geometric invariant theory}, third
  ed., Ergebnisse der Mathematik und ihrer Grenzgebiete (2) [Results in
  Mathematics and Related Areas (2)], vol.~34, Springer-Verlag, Berlin, 1994.
  \MR{MR1304906 (95m:14012)}

\bibitem[OU]{Okawa-Ueda_ncquadric}
Shinnosuke Okawa and Kazushi Ueda, \emph{Noncommutative quadric surfaces and
  noncommutative conifolds}, in preparation.

\bibitem[Skl82]{MR684124}
E.~K. Sklyanin, \emph{Some algebraic structures connected with the
  {Y}ang-{B}axter equation}, Funktsional. Anal. i Prilozhen. \textbf{16}
  (1982), no.~4, 27--34, 96. \MR{684124 (84c:82004)}

\bibitem[VdB11]{MR2836401}
Michel Van~den Bergh, \emph{Noncommutative quadrics}, International Mathematics
  Research Notices. IMRN (2011), no.~17, 3983--4026. \MR{MR2836401
  (2012m:14004)}

\bibitem[VDDMV02]{MR1910235}
F.~Verstraete, J.~Dehaene, B.~De~Moor, and H.~Verschelde, \emph{Four qubits can
  be entangled in nine different ways}, Phys. Rev. A (3) \textbf{65} (2002),
  no.~5, part A, 052112, 5. \MR{1910235 (2003c:81033)}

\end{thebibliography}

\noindent
Shinnosuke Okawa

Department of Mathematics,
Graduate School of Science,
Osaka University,
Machikaneyama 1-1,
Toyonaka,
Osaka,
560-0043,
Japan.

{\em e-mail address}\ : \  okawa@math.sci.osaka-u.ac.jp
\ \vspace{0mm} \\

\ \vspace{0mm} \\

\noindent
Kazushi Ueda

Department of Mathematics,
Graduate School of Science,
Osaka University,
Machikaneyama 1-1,
Toyonaka,
Osaka,
560-0043,
Japan.

{\em e-mail address}\ : \  kazushi@math.sci.osaka-u.ac.jp
\ \vspace{0mm} \\

\end{document}